\newcolumntype{A}{>{\columncolor[gray]{0.8}}c}
\newcolumntype{B}{>{\columncolor[gray]{0.6}}c}
\newcommand{\bsu}{\boldsymbol{u}}    % vector u
\newcommand{\bsv}{\boldsymbol{v}}    % vector v
\newcommand{\bsz}{\boldsymbol{z}}    % vector z
\newcommand{\bsZ}{\boldsymbol{Z}}    % vector U
\DeclareMathOperator{\Var}{Var}
\newcommand{\Prob}{\mathbb{P}}
\newcommand{\Exp}{\mathbb{E}}
\def\citep#1#2{\cite[{#1}]{#2}}
\theoremstyle{plain}
  \newtheorem{theorem}{Theorem}
  \newtheorem{lemma}{Lemma}
  \newtheorem{proposition}{Proposition}
\theoremstyle{definition}
\theoremstyle{remark}
\newcommand{\RefEq}[1]{~\textup{(\ref{#1})}}
\newcommand{\RefEqEq}[2]{~\textup{(\ref{#1})} and~\textup{(\ref{#2})}}
\newcommand{\RefSec}[1]{Section~\textup{\ref{#1}}}
\newcommand{\RefSecSec}[2]{Sections~\textup{\ref{#1}} and~\textup{\ref{#2}}}
\newcommand{\RefProp}[1]{Proposition~\textup{\ref{#1}}}
\newcommand{\RefLem}[1]{Lemma~\textup{\ref{#1}}}
\newcommand{\RefFig}[1]{Figure~\textup{\ref{#1}}}
\newcommand{\RefTab}[1]{Table~\textup{\ref{#1}}}
\begin{document}

\title{Conditional sampling for barrier option pricing under the Heston model}
\author{Nico Achtsis \and Ronald Cools \and Dirk Nuyens}
%\institute{Nico Achtsis, Ronald Cools, Dirk Nuyens \at
%Department of Computer Science, KU Leuven, B-3001 Heverlee, Belgium (\today) \\
%\email{nico.achtsis@cs.kuleuven.be}, \email{dirk.nuyens@cs.kuleuven.be}, \email{ronald.cools@cs.kuleuven.be}}
\maketitle

%-------------------------------------------------------------------------------
% SECTION: ABSTRACT
%-------------------------------------------------------------------------------

\abstract{
We propose a quasi-Monte Carlo algorithm for pricing knock-out and knock-in barrier options under the Heston (1993) stochastic volatility model.
This is done by modifying the LT method from Imai and Tan (2006) for the Heston model such that the first uniform variable does not influence the stochastic volatility path and then conditionally modifying its marginals to fulfill the barrier condition(s).
We show that this method is unbiased and never does worse than the unconditional algorithm.
In addition, the conditioning is combined with a root finding method to also force positive payouts.
The effectiveness of this method is shown by extensive numerical results.
}

%-------------------------------------------------------------------------------
% SECTION: INTRODUCTION
%-------------------------------------------------------------------------------

\section{Introduction}

It is well known that the quasi-Monte Carlo method in combination with a good path construction method, like the LT method from Imai and Tan \cite{IT2006}, can be a helpful tool in option pricing, see, e.g., \cite{GKSW2008,LEC2009}.
The integrand functions usually take the form $\max(f,0)$ and a good path construction will somehow align the discontinuity in the derivative along the axes.
However, as soon as other discontinuities, in the form of barrier conditions, are introduced, the performance of the quasi-Monte Carlo method degrades, see \cite{NW2012} for an illustration and an alternative solution.
This is also the case for the Monte Carlo method for which in  \cite{GS2001} a conditional sampling method has been introduced to alleviate this problem.

In previous work \cite{NicoBSLT} we have introduced a conditional sampling method to deal with barrier conditions in the Black--Scholes setting that can be used in combination with a good path construction method like the LT method.
In that paper we have shown that such a scheme always performs better than the unconditional method.
Here we consider the more realistic Heston model \cite{Heston1993}, which has a stochastic volatility component, and derive an algorithm to do conditional sampling on barrier conditions under this model.
We focus solely on the LT path construction which enables us to construct a good path construction for the payoff; excluding the maximum and barrier conditions which are handled by a root finding method (optional) and the conditional sampling proposed in this paper.
%We refer to the previous work \cite{NicoBSLT} for the proof that our conditional scheme is unbiased and improves upon the unconditional sampling, which we will not repeat here.

%-------------------------------------------------------------------------------
% SECTION: LT method
%-------------------------------------------------------------------------------

\section{The LT method for Heston under log prices}\label{sec:LTmethod}

Assume a Heston world \cite{Heston1993} in which the risk-neutral dynamics of the asset are given by
\begin{align*}
  dS(t)
  &=
  rS(t)dt + \sqrt{V(t)}S(t)dW^1(t)
  , \\
  dV(t)
  &=
  (\theta-V(t))\kappa dt+\sigma\sqrt{V(t)}dW^2(t)
  , \\
  dW^1(t)dW^2(t)
  &=
  \rho dt
  ,
\end{align*}
where $S(t)$ denotes the price of the asset at time $t$, $r$ is the risk-free interest rate, $\kappa$ is the mean-reversion parameter of the volatility process, $\theta$ is the long run average price variance and $\sigma$ is the volatility of the volatility. 
We assume the Feller condition $2 \kappa \theta \ge \sigma^2$ such that the process $V(t)$ is strictly positive.
The parameter $\rho$ controls the correlation between the log-returns and the volatility.
% In order to keep the volatility process strictly positive, we impose the Feller condition: $2\kappa\theta\geq\sigma^2$.
A useful observation is that one can write
\begin{align*}
 W^1(t)
 &=
 \rho W^2(t) + \sqrt{1-\rho^2} \, W^3(t),
\end{align*}
where $W^2(t)$ and $W^3(t)$ are independent Brownian motions.
This corresponds to the Cholesky decomposition of the correlation structure.
When resorting to Monte Carlo techniques for pricing options under this model, asset paths need to be discretized. 
For simplicity we assume that time is discretized using $m$ equidistant time steps $\Delta t=T/m$, but all results can be extended to the more general case.
The notations $\hat{S}_k$ and $\hat{V}_k$ will be used for $\hat{S}(k\Delta t)$ and $\hat{V}(k\Delta t)$, respectively.
We use the Euler--Maruyama scheme \cite{KP1992} to discretize the asset paths in log-space (see also \cite[Sect.\ 6.5]{Glass2003} w.r.t.\ transformations of variables) and sample the independent Brownian motions $W^2$ and $W^3$ by using independent standard normal variables $Z^1$ and $Z^2$; for $k=0,\ldots,m-1$,
% \begin{align*}
%   \hat{S}_{k+1}
%   &=
%   \hat{S}_k+r\hat{S}_k\Delta t + \sqrt{\hat{V}_k}\hat{S}_k\sqrt{\Delta t}\left(\rho Z^1_{k+1} + \sqrt{1-\rho^2} Z^2_{k+1} \right)
%   , \\
%   \hat{V}_{k+1}
%   &=
%   \hat{V}_k+(\theta-\hat{V}_k)\kappa\Delta t+\sigma\sqrt{\hat{V}_k}\sqrt{\Delta t}Z^1_{k+1}
%   .
% \end{align*}
\begin{align}
  \label{eq:logS}
  \log\hat{S}_{k+1}
  &=
  \log\hat{S}_{k}+\left(r-\frac{\hat{V}_k}{2}\right)\Delta t + \sqrt{\hat{V}_k}\sqrt{\Delta t}\left(\rho Z^1_{k+1}+\sqrt{1-\rho^2}Z^2_{k+1}\right)
  , \\
  \label{eq:V}
  \hat{V}_{k+1}
  &=
  \hat{V}_k+(\theta-\hat{V}_k)\kappa\Delta t+\sigma\sqrt{\hat{V}_k}\sqrt{\Delta t}Z^1_{k+1}
  .
\end{align}
%We will explain in section \RefSec{sect:ORLTCS} why we switch to log-space.
For our method it is important that $\hat{V}$ is sampled solely from $Z^{1}$ and to switch to log-space.
This will be explained in the next sections.

Write $\bsZ=(Z^1_1,Z^2_1,Z^1_2,Z^2_2,\ldots,Z^2_m)'\in\mathbb{R}^{2m}$, where the prime is used to denote the transpose of a vector.
Then $\bsZ$ has multivariate standard normal distribution.
Assuming a European option payoff represented as
\begin{align*}
 &
  \max\left( f(\bsZ),0\right)
\end{align*}
one usually simulates the function $f(\bsZ)$ by mapping a uniform variate $\bsu$ in the unit cube to $\bsZ$ by applying the inverse cumulative distribution function $\Phi^{-1}$. 
We will call this method the standard Monte Carlo method (MC).
When using quasi-Monte Carlo (QMC), the uniform variates are replaced by a low-discrepancy point set.
Our conditional sampling scheme will use the influence of the first uniform variable $u_1$ to try and force the barrier conditions to be met.
For this we will employ the LT method.
First, the uniformly sampled variate $\bsu$ is mapped to a standard normal variate $\bsz$ as in the MC method.
The function $f(\bsZ)$ is then sampled using the transformation $\bsZ = Q \bsz$ for a carefully chosen orthogonal matrix $Q$.
%
%
%
%The vector $\bsz$ can be multiplied with any orthogonal matrix $Q$ while retaining the standard normal distribution. % , see, e.g., \cite{Glass2003}.
%Under the LT method described in \cite{IT2006}, instead of simulating $f$ directly from  a sample $\bsz$, first an orthogonal transformation is applied, i.e., $f$ is simulated from $\bsZ = Q\bsz$
%for a carefully chosen orthogonal matrix $Q$. 
This means that in\RefEq{eq:logS} and\RefEq{eq:V} we take, for $k=0,\ldots,m-1$,
\begin{align*}
  Z^1_{k+1} &= \sum_{n=1}^{2m} q_{2k+1,n} z_n
  &\text{and}&&
  Z^2_{k+1} &= \sum_{n=1}^{2m} q_{2k+2,n} z_n
  ,
\end{align*}
where $q_{i,j}$ denotes the element from the matrix $Q$ at row~$i$ and column~$j$.
%% DN2:
We remark that, for ease of notation, we will write $f(\bsZ)$, $f(\bsz)$, $f(\bsu)$ or $f(\hat{S}_1, \ldots, \hat{S}_m)$ to denote the function $f$ from above in terms of normal variates $\bsZ$ or $\bsz$, uniform variates $\bsu$ or just the discretized stock path $\hat{S}_1, \ldots, \hat{S}_m$.

In what follows the notation $Q_{\bullet k}$ denotes the $k$th column of $Q$ and $Q_{k \bullet}$ denotes the $k$th row.
The LT method \cite{IT2006} chooses the matrix $Q$ according to the following optimization problem:
\begin{align*}
 \underset{Q_{\bullet k}\in\mathbb{R}^{2m}}{\text{maximize}} \qquad& \text{variance contribution of $f$ due to $k$th dimension} \\
  \text{subject to} \qquad& \lVert Q_{\bullet k} \rVert=1, \\
		    \qquad& \langle Q^*_{\bullet j},Q_{\bullet k} \rangle=0, \quad j=1,\ldots,k-1,
\end{align*}
where $Q^*_{\bullet j}$ denotes the columns of $Q$ that have already been optimized in the previous iterations.
The algorithm is carried out iteratively for $k=1,2,\ldots,2m$ so that in the $k$th optimization step the objective function ensures that, given columns $Q^*_{\bullet j}$, $j=1,\ldots,k-1$ which have already been determined in the previous iterations, the variance contribution due to the $k$th dimension is maximized while the constraints ensure orthogonality.
Being able to express the variance contribution for each component analytically for general payoff functions $f$ can be quite complicated.
Therefore, Imai and Tan \cite{IT2006} propose to approximate the objective function by linearizing it using a first-order Taylor expansion for $\bsz=\hat{\bsz}+\Delta \bsz$,
\begin{align*}
 f(\bsz)
  &\approx
  f(\hat{\bsz}) + \sum_{k=1}^{2m} \left.\frac{\partial f}{\partial z_{k}}\right|_{\bsz=\hat{\bsz}}\Delta z_{k}. 
\end{align*}
Using this expansion, the variance contributed due to the $k$th component is 
\begin{align*}
 \left( \left.\frac{\partial f}{\partial z_k}\right|_{\bsz=\hat{\bsz}}\right)^2.
\end{align*}
The expansion points are chosen as $\hat{\bsz}_k = (1,\ldots,1,0,\ldots,0)$, the vector with $k-1$ leading ones.
Different expansion points will lead to different transformation matrices; this particular choice allows for an efficient construction.
The optimization problem becomes
\begin{align}
 \underset{Q_{\bullet k}\in\mathbb{R}^{2m}}{\text{maximize}} \qquad& \left( \left.\frac{\partial f}{\partial z_k}\right|_{\bsz=\hat{\bsz}_k}\right)^2 \label{eq:optLT}\\
  \text{subject to} \qquad& \lVert Q_{\bullet k} \rVert=1, \nonumber\\
		    \qquad& \langle Q^*_{\bullet j},Q_{\bullet k} \rangle=0, \quad j=1,\ldots,k-1.\nonumber
\end{align}
%In \cite{IT2006}, only the first $25$ or $50$ of the $m$ columns are computed.
%The loss of efficiency is argued to be minimal, particularly when the underlying LT construction is effective at dimension reduction with the first few dimensions already capturing most of the variance. 
%However, as this construction is just part of the startup cost, the algorithm used here for all numerical examples constructs the entire matrix.

%\bigskip

The original Imai and Tan paper \cite{IT2006} considers a European call option to illustrate the computational advantage of the LT method under the Heston model.
In their paper the stochastic volatility is described in \cite[Sect.~4.2]{IT2006} and we will revisit their method in \RefSec{sect:ORLTCS}.
For ease of illustration we also consider the payoff function inside the $\max$-function to be that of a European call option
\begin{align*}
 f(\bsz)
  &=
  \hat{S}_m - K
\end{align*}
where $K$ is the strike price.
For notational ease, we introduce the following functions:
{ \allowdisplaybreaks
\begin{align*}
 f^1_k &= \frac{\sqrt{\Delta t}}{2\sqrt{\hat{V}_{k}}}\left(\rho Z_{k+1}^1+\sqrt{1-\rho^2}Z^2_{k+1}\right)-\frac{\Delta t}{2}, \\
 f^2_k &= \sqrt{\hat{V}_{k}}\sqrt{\Delta t}, \\
 f^3_k &= 1-\kappa\Delta t + \frac{\sigma\sqrt{\Delta t}}{2\sqrt{\hat{V}_k}} Z^1_{k+1}, \\
 f^4_k &= \sigma \sqrt{\hat{V}_{k}} \sqrt{\Delta t}. 
\end{align*}
}%
Note that all the above functions $f^i$ depend on $\bsZ$.
Similar to \cite{IT2006},
to find the partial derivatives $\partial \hat{S}_m/\partial z_i$ needed for the optimization algorithm, we obtain the recursive relations (with initial conditions $\partial \log \hat{S}_0 /\partial z_i=0$ and $\partial \hat{V}_0/\partial z_i=0$)
\begin{align}
  \frac{\partial \log\hat{S}_{k+1}}{\partial z_i}
  &=
  \frac{\partial\log\hat{S}_{k}}{\partial z_i} + \frac{\partial\hat{V}_k}{\partial z_i}f^1_{k} + \left(\rho q_{2k+1,i}+\sqrt{1-\rho^2}q_{2k+2,i}\right) f^2_{k}
  , \label{eq:recLT1} \\
  \frac{\partial \hat{V}_{k+1}}{\partial z_i}
 &=
 \frac{\partial \hat{V}_k}{\partial z_i} f^3_k + q_{2k+1,i} f^4_k
  \label{eq:recLT2}
  ,
\end{align}
where $k$ goes from $0$ to $m-1$.
The chain rule is used to obtain
\begin{align*}
 \frac{\partial \hat{S}_m}{\partial z_i}
 &=
 \hat{S}_m \frac{\partial \log \hat{S}_m}{\partial z_i}
 .
\end{align*}
We will use the following lemma to calculate the transformation matrix.
\begin{lemma}\label{lem:RecAid}
  The recursion
  \begin{align*}
    F_{k+1} &= a_k F_k + b_k q_k ,
    \\
    G_{k+1} &= c_k G_k + d_k q_k + e_k F_k,
  \end{align*}
  with initial values $F_0 = G_0 = 0$ can be written at index $k+1$ as a linear combination of the $q_{\ell}$, $\ell=0,\ldots,k$, as follows
  \begin{align*}
    F_{k+1} &= \sum_{\ell=0}^k q_{\ell} \, b_{\ell} \prod_{j=\ell+1}^k a_j ,
    \\
    G_{k+1} &= \sum_{\ell=0}^k q_{\ell}
      \left(
        d_{\ell} \prod_{j=\ell+1}^k c_j
        +
        b_{\ell} \sum_{t=\ell+1}^k e_t \prod_{v=t+1}^k c_v \prod_{v=\ell+1}^{t-1} a_v\right)
    .
  \end{align*}
\end{lemma}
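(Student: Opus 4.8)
The plan is to prove both identities by induction on $k$, treating the $F$-recursion first (since it is self-contained) and then the $G$-recursion (which uses the closed form for $F$). The base case $k=0$ is immediate: the formula for $F_1$ reduces to the single term $\ell=0$, namely $q_0 b_0$ (the empty product $\prod_{j=1}^{0} a_j$ equals $1$), which matches $F_1 = a_0 F_0 + b_0 q_0 = b_0 q_0$ since $F_0=0$. Likewise for $G_1$ the only surviving term is $\ell=0$ giving $q_0\bigl(d_0 + b_0 \sum_{t=1}^{0}(\cdots)\bigr) = q_0 d_0$, which matches $G_1 = c_0 G_0 + d_0 q_0 + e_0 F_0 = d_0 q_0$.

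For the inductive step on $F$, I would assume the closed form holds at index $k+1$ and substitute it into $F_{k+2} = a_{k+1} F_{k+1} + b_{k+1} q_{k+1}$. Multiplying the sum for $F_{k+1}$ by $a_{k+1}$ turns each $\prod_{j=\ell+1}^{k} a_j$ into $\prod_{j=\ell+1}^{k+1} a_j$, and the extra term $b_{k+1} q_{k+1}$ is exactly the missing $\ell = k+1$ term (with empty product $\prod_{j=k+2}^{k+1} a_j = 1$); this yields the claimed formula at index $k+2$. The $F$-step is routine reindexing of products.

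For $G$, I would substitute both the inductive hypothesis for $G_{k+1}$ and the (already proven) closed form for $F_{k+1}$ into $G_{k+2} = c_{k+1} G_{k+1} + d_{k+1} q_{k+1} + e_{k+1} F_{k+1}$. Multiplying the $G_{k+1}$ sum by $c_{k+1}$ upgrades $\prod_{v=t+1}^{k} c_v$ to $\prod_{v=t+1}^{k+1} c_v$ and $\prod_{j=\ell+1}^{k} c_j$ to $\prod_{j=\ell+1}^{k+1} c_j$; the term $d_{k+1} q_{k+1}$ supplies the new $\ell = k+1$ contribution; and the crucial point is that $e_{k+1} F_{k+1} = e_{k+1}\sum_{\ell=0}^{k} q_\ell\, b_\ell \prod_{j=\ell+1}^{k} a_j$ provides exactly the $t = k+1$ summand that must be appended to the inner sum $\sum_{t=\ell+1}^{k}$ to extend it to $\sum_{t=\ell+1}^{k+1}$, since for $t=k+1$ the factor $\prod_{v=t+1}^{k+1} c_v$ is empty (equals $1$) and $\prod_{v=\ell+1}^{t-1} a_v = \prod_{v=\ell+1}^{k} a_v$. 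Collecting these three pieces reconstructs the closed form at index $k+2$.

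The main obstacle is purely bookkeeping: keeping the three nested products in the $G$-formula correctly aligned when the upper limits shift from $k$ to $k+1$, and in particular verifying that the $e_{k+1}F_{k+1}$ term lands precisely as the $t=k+1$ term of the inner sum with the right split between the $c$-product and the $a$-product. I would be careful to state the empty-product convention $\prod_{j=p}^{q}(\cdot) = 1$ for $q < p$ at the outset, since every boundary term in the induction relies on it. An alternative, perhaps cleaner, route is to verify the closed form directly by unrolling the recursion: iterating $F_{k+1} = a_k F_k + b_k q_k$ down to $F_0 = 0$ immediately gives the $F$-sum, and then substituting this into the $G$-recursion and unrolling once more produces the double sum after interchanging the order of summation; but the induction is more mechanical and less error-prone to write out.
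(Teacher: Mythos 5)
Your proposal is correct and takes essentially the same approach as the paper: induction on $k$, with the closed form of $F_{k+1}$ supplying exactly the $t=k+1$ contribution in the step for $G$. The paper merely packages this by first interchanging the order of summation so the double sum reads $\sum_{t=1}^{k} e_t F_t \prod_{v=t+1}^{k} c_v$, which is the same bookkeeping you carry out termwise.
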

\begin{proof}
  The formula for $F_{k+1}$ follows immediately by induction.
  For the expansion of $G_{k+1}$ we first rewrite this formula in a more explicit recursive form
  \begin{align*}
    G_{k+1}
    &=
    \sum_{\ell=0}^{k} q_{\ell} d_{\ell} \prod_{j=\ell+1}^k c_j
    +
    \sum_{\ell=0}^{k-1} q_{\ell} b_{\ell} \sum_{t=\ell+1}^k e_t \prod_{v=t+1}^k c_v  \prod_{v=\ell+1}^{t-1} a_v 
    \\
    &=
    \sum_{\ell=0}^{k} q_{\ell} d_{\ell} \prod_{j=\ell+1}^k c_j
    +
    \sum_{t=1}^k e_t \prod_{v=t+1}^k c_v \left( \sum_{\ell=0}^{t-1} q_{\ell} b_{\ell} \prod_{v=\ell+1}^{t-1} a_v \right)
    .
  \end{align*}
  The part in-between the braces equals $F_t$ and the proof now follows by induction on $k$.
\end{proof}
A similar result is obtained if the second recursion is replaced by $G_{k+1} = c_k G_k + d_k q_k + d'_k q'_k + e_k F_k$.
Furthermore the coefficients in the expansion for $q_{\ell}$ and $q'_{\ell}$ can cheaply be calculated recursively.
Using this lemma, we can make the log-LT construction for the Heston model explicit in the following lemma.
\begin{proposition}\label{prop:optvLTc}
 The column vector $Q_{\bullet k}$ that solves the optimization problem\RefEq{eq:optLT} for a call option under the Heston model is given by $Q_{\bullet k} = \pm \bsv / \|\bsv\|$ where
\begin{align*}
  v_{2\ell+1} &= \hat{S}_m f^2_{\ell} \rho + \hat{S}_m f^4_{\ell} \sum_{t=\ell+1}^{m-1}f^1_t\prod_{v=\ell+1}^{t-1}f^3_v
  ,
  \\
   v_{2\ell+2} &= \hat{S}_m f^2_{\ell} \sqrt{1-\rho^2},
\end{align*}
for $\ell=0,\ldots,m-1$.
\end{proposition}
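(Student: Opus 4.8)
The plan is to reduce the optimization problem to a one-line Cauchy--Schwarz argument once the objective has been rewritten as a squared linear functional of the unknown column $Q_{\bullet k}$; the real work is producing that linear functional explicitly with the help of \RefLem{lem:RecAid}.

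First I would use $f(\bsz)=\hat{S}_m-K$ together with the chain rule $\partial\hat{S}_m/\partial z_k=\hat{S}_m\,\partial\log\hat{S}_m/\partial z_k$ quoted above, so that the objective in\RefEq{eq:optLT} equals $\hat{S}_m^2\bigl(\partial\log\hat{S}_m/\partial z_k\bigr)^2$ and everything reduces to $\partial\log\hat{S}_m/\partial z_k$ at the expansion point $\hat{\bsz}_k$. The recursions\RefEqEq{eq:recLT1}{eq:recLT2} are exactly of the form treated in \RefLem{lem:RecAid}, in the two-$q$ variant noted right after its proof: I identify $F_k\leftrightarrow\partial\hat{V}_k/\partial z_i$, $G_k\leftrightarrow\partial\log\hat{S}_k/\partial z_i$, $a_k=f^3_k$, $b_k=f^4_k$, $q_k=q_{2k+1,i}$, $c_k=1$, $d_k=\rho f^2_k$, $d'_k=\sqrt{1-\rho^2}\,f^2_k$, $q'_k=q_{2k+2,i}$, $e_k=f^1_k$, with zero initial values. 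Specializing the closed form of the lemma at index $k+1=m$ and using $c_k\equiv 1$ to collapse the $\prod c_v$ factors yields
\begin{align*}
\frac{\partial\log\hat{S}_m}{\partial z_i}
&=\sum_{\ell=0}^{m-1}\left[\left(\rho f^2_\ell+f^4_\ell\sum_{t=\ell+1}^{m-1}f^1_t\prod_{v=\ell+1}^{t-1}f^3_v\right)q_{2\ell+1,i}+\sqrt{1-\rho^2}\,f^2_\ell\,q_{2\ell+2,i}\right],
\end{align*}
and multiplying by $\hat{S}_m$ gives $\partial\hat{S}_m/\partial z_i=\langle\bsv,Q_{\bullet i}\rangle$ with $\bsv$ precisely the vector in the statement.

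Next I would argue that $\bsv$ is constant as far as the $k$th optimization step is concerned. Every quantity entering $\bsv$ — namely $\hat{S}_m$, the $\hat{V}_\ell$ and the $f^i_\ell$, all of which depend on $\bsz$ only through $\bsZ=Q\bsz$ — is evaluated at $\hat{\bsz}_k=(1,\ldots,1,0,\ldots,0)$ with $k-1$ leading ones, so it depends only on the already-fixed columns $Q_{\bullet 1},\ldots,Q_{\bullet k-1}$ and not on the free variable $Q_{\bullet k}$. Hence\RefEq{eq:optLT} becomes: maximize $\langle\bsv,Q_{\bullet k}\rangle^2$ over unit vectors $Q_{\bullet k}$ orthogonal to $Q^*_{\bullet 1},\ldots,Q^*_{\bullet k-1}$. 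Applying Cauchy--Schwarz inside the subspace orthogonal to those columns, the maximum is attained exactly when $Q_{\bullet k}$ points in the direction of $\bsv$ — the sign being immaterial since only $\langle\bsv,Q_{\bullet k}\rangle^2$ enters — which gives $Q_{\bullet k}=\pm\bsv/\lVert\bsv\rVert$; for $k>1$ one first removes from $\bsv$ its components along $Q^*_{\bullet 1},\ldots,Q^*_{\bullet k-1}$ (Gram--Schmidt), which is exactly what the iterative construction performs.

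The step I expect to be the main obstacle is purely the index bookkeeping in the application of \RefLem{lem:RecAid}: matching the lemma's running indices $\ell,t,v\in\{0,\ldots,k\}$ to the Heston time steps $\{0,\ldots,m-1\}$ together with the even/odd splitting of the rows $2k+1$ and $2k+2$ of $Q$, and verifying that the partial products $\prod_{v=\ell+1}^{t-1}f^3_v$ and the partial sum $\sum_{t=\ell+1}^{m-1}f^1_t$ come out with exactly the ranges claimed. The remaining ingredients — the chain-rule reduction, the observation that $\bsv$ is frozen at $\hat{\bsz}_k$, and the Cauchy--Schwarz conclusion — are routine.
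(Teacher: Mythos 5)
Your proof is correct and follows essentially the same route as the paper: apply \RefLem{lem:RecAid} (in its two-$q$ variant, with $a_k=f^3_k$, $b_k=f^4_k$, $c_k=1$, $d_k=\rho f^2_k$, $d'_k=\sqrt{1-\rho^2}f^2_k$, $e_k=f^1_k$) to the recursions\RefEqEq{eq:recLT1}{eq:recLT2} to write $\partial\hat{S}_m/\partial z_k=\langle\bsv,Q_{\bullet k}\rangle$ and read off $\bsv$. The only difference is that the paper simply cites \cite[Theorem 1]{IT2006} for the optimization step, whereas you reprove it via Cauchy--Schwarz on the orthogonal complement (and your explicit remark about projecting $\bsv$ for $k>1$ is, if anything, slightly more careful than the statement as written).
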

\begin{proof}
  By \cite[Theorem 1]{IT2006} the solution to the optimization problem\RefEq{eq:optLT} is given by 
  \begin{align*}
   Q_{\bullet k}
   &=
   \pm \frac{\bsv}{\| \bsv \|}
   ,
  \end{align*} 
  where $\bsv$ is determined from
  \begin{align*}
    Q_{\bullet k}' \bsv
    &=
    \frac{\partial \hat{S}_m}{\partial z_k}
    =
    \hat{S}_m \frac{\partial \log \hat{S}_m}{\partial z_k}
    .
  \end{align*}
  With the help of \RefLem{lem:RecAid} we find from\RefEq{eq:recLT1} and\RefEq{eq:recLT2}
  \begin{align*}
    \frac{\partial \log \hat{S}_m}{\partial z_k}
    &=
    \sum_{\ell=0}^{m-1} q_{2\ell+1,k} \left( \rho f^2_{\ell} + f^4_{\ell} \sum_{t=\ell+1}^{m-1} f^1_t \prod_{v=\ell+1}^{t-1} f^3_v \right)
    +
    \sum_{\ell=0}^{m-1} q_{2\ell+2,k} \sqrt{1-\rho^2} f^2_{\ell}
    ,
  \end{align*}
  from which the result now follows. 
\end{proof}

Note that since $\hat{S}_m$ and all functions $f^i$ depend on $\bsZ$, the vector $\bsv$ changes in each iteration step of\RefEq{eq:optLT} as the reference point $\hat{\bsz}$ is changed.

This construction can also be used for a put option with payoff
\begin{align*}
 f(\bsz)
&=
K-\hat{S}_m.
\end{align*}
In case of an arithmetic Asian option, the payoff is given by
\begin{align*}
 f(\bsz)
  &=
  \frac{1}{m}\sum_{j=1}^m \hat{S}_j - K.
\end{align*}
In that case the optimization problem\RefEq{eq:optLT} contains the sum of partial derivatives
\begin{align*}
 \left.\frac{\partial f}{\partial z_k}\right|_{\bsz=\hat{\bsz}_k}
&=
 \frac{1}{m}\left.\sum_{j=1}^m \frac{\partial \hat{S}_j}{\partial z_k} \right|_{\bsz=\hat{\bsz}_k}
 .
\end{align*}
It is thus straightforward to use the results for the call option in \RefProp{prop:optvLTc} to construct the transformation matrix for the arithmetic Asian option.

Crucial to our conditional sampling algorithm is that we modify the LT construction by forcing all odd elements in the first column of $Q$ to zero, i.e., $q_{2k+1,1} = 0$ for $k=0,\ldots,m-1$.
%Orthogonality is not an issue since this can be done before the other columns are calculated.
%Additionally we will now force all odd elements in the first column of $Q$ to zero for our conditional sampling method.
%This slightly suboptimal matrix $Q$ where in the first column all odd elements equal zero will be used for this method.
This removes the influence of $z_1$ to $Z^1_k$ and thus $\hat{V}_k$ for all $k$.
The LT algorithm then finds the orthogonal matrix $Q$ which solves the optimization problem under this extra constraint (which fixes $m$ elements of the $4m^2$).
In the next section we will show this leads to an elegant conditional sampling scheme.
%It will follow in the next section that this assumption on $Q$ will lead to an elegant conditional sampling scheme.
\begin{lemma}\label{lem:samesign}
  Under the condition that $q_{2\ell+1,1} = 0$ for $\ell=0,\ldots,m-1$ we have that the elements $q_{2\ell+2,1}$ all have the same sign.
\end{lemma}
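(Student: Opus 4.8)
The plan is to look directly at the formula for $Q_{\bullet 1}$ coming from \RefProp{prop:optvLTc} and its proof. When we impose the extra constraint $q_{2\ell+1,1} = 0$ for $\ell = 0,\ldots,m-1$, the optimization problem \RefEq{eq:optLT} for $k=1$ is still of the form covered by \cite[Theorem 1]{IT2006}, except that we now optimize only over the even-indexed coordinates $q_{2\ell+2,1}$. Hence the solution is again $Q_{\bullet 1} = \pm\bsv/\|\bsv\|$, but with $\bsv$ restricted to the even coordinates: the odd coordinates are forced to zero and play no role. So the only components of $\bsv$ that matter are
\begin{align*}
  v_{2\ell+2} &= \hat{S}_m\, f^2_{\ell}\sqrt{1-\rho^2}, \qquad \ell = 0,\ldots,m-1.
\end{align*}

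Now I would simply read off the signs. Recall $f^2_{\ell} = \sqrt{\hat{V}_{\ell}}\sqrt{\Delta t}$, which is strictly positive because $\hat{V}_\ell > 0$ (guaranteed here by the Feller condition, and in any case $\sqrt{\hat{V}_\ell}$ is real and nonnegative and nonzero at the expansion point), $\Delta t > 0$, and $\sqrt{1-\rho^2} > 0$ since $|\rho| < 1$. Also $\hat{S}_m > 0$ as an exponential of a real number. Therefore every $v_{2\ell+2}$ is strictly positive, so the normalized vector $Q_{\bullet 1} = +\bsv/\|\bsv\|$ has all even coordinates $q_{2\ell+2,1} > 0$, while the sign choice $-\bsv/\|\bsv\|$ makes them all strictly negative. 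Either way they share a common sign, which is the claim.

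The only genuine obstacle is the bookkeeping at the first step: one must be sure that, after zeroing the odd entries of the first column, the remaining optimization still reduces to the Imai–Tan eigenvector problem restricted to the even coordinates, so that \RefProp{prop:optvLTc} applies verbatim on that subspace. This is immediate because the constraint $\|Q_{\bullet 1}\| = 1$ and the objective $(\partial \hat{S}_m/\partial z_1)^2 = (Q_{\bullet 1}'\bsv)^2$ are unchanged, and there are no orthogonality constraints yet (since $k=1$); restricting the feasible set to a coordinate subspace simply replaces $\bsv$ by its projection onto that subspace, and the maximizer is the normalized projection. Everything else is the sign computation above, which is routine. I would also remark that the statement does not assert the sign is positive — only that all the $q_{2\ell+2,1}$ agree — which is exactly what the $\pm$ ambiguity in \RefProp{prop:optvLTc} allows, and this is the form in which the lemma will be used in the next section.
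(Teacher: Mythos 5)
Your proposal is correct and follows essentially the same route as the paper: invoke \RefProp{prop:optvLTc} for $k=1$ under the constraint that the odd entries vanish, observe that $v_{2\ell+2} = \hat{S}_m \sqrt{\hat{V}_{\ell}}\sqrt{\Delta t}\sqrt{1-\rho^2} > 0$ for every $\ell$, and conclude via the $\pm\bsv/\|\bsv\|$ normalization. Your extra remark justifying that the constrained optimization reduces to the normalized projection onto the even coordinates is a welcome elaboration of a step the paper leaves implicit, but it is not a different argument.
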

\begin{proof}
  From \RefProp{prop:optvLTc}, for $k=1$, we find that $q_{2\ell+2,1}$ is proportional to $v_{2\ell+2}$, i.e.,
  %for the expansion point $\hat{\bsz}_1 = (0,\ldots,0)$, i.e.,
  \begin{align*}
    v_{2\ell+2}
    &=
    \hat{S}_m \sqrt{\hat{V}_{\ell}} \sqrt{\Delta t} \sqrt{1-\rho^2}
    ,
  \end{align*}
  which is always positive,
  and $q_{2\ell+1,1} = v_{2\ell+1} = 0$.
  Following \RefProp{prop:optvLTc} we now take $\pm \bsv / \|\bsv\|$ from which the result follows. 
\end{proof}

%-------------------------------------------------------------------------------
% SUBSECTION: CONDITIONAL SAMPLING UNDER LOG LT
%-------------------------------------------------------------------------------

\section{Conditional sampling on log-LT}\label{sect:cs}

For expository reasons assume for now an up-\&-out option with barrier $B$,
\begin{align}
 g(\hat{S}_1,\ldots,\hat{S}_m)
 &=
 \max\left(f(\hat{S}_1,\ldots,\hat{S}_m),0\right) 
 \, \mathbb{I}\left\{\max_{k}\hat{S}_{k}<B\right\}
 .
 \label{eq:uao_opt}
\end{align} 
The condition at time $t_{k+1}$ that the asset stays below the barrier can then be written, for $k=0,\ldots,m-1$, as
\begin{align*}
 \log \hat{S}_{k+1}
 &=
 \log \hat{S}_k + \left(r-\frac{\hat{V}_k}{2}\right)\Delta t + \sqrt{\hat{V}_k}\sqrt{\Delta t}
  %\sum_{n=1}^{2m}\left(\rho q_{2k+1,n}+\sqrt{1-\rho^2}q_{2k+2,n}\right) z_n
  \left( \rho Z^1_{k+1} + \sqrt{1-\rho^2} Z^2_{k+1} \right) 
  \\
 &= \log S_0 + r(k+1)\Delta t - \Delta t\sum_{\ell=0}^k \frac{\hat{V}_{\ell}^2}{2}  \\
 &\qquad + \sum_{\ell=0}^k\sqrt{\hat{V}_{\ell}}\sqrt{\Delta t}\sum_{n=2}^{2m}\left( \rho q_{2\ell+1,n}+\sqrt{1-\rho^2} \, q_{2\ell+2,n}\right) z_n  \\
 &\qquad + z_1\sqrt{\Delta t}\sqrt{1-\rho^2}\sum_{\ell=0}^k\sqrt{\hat{V}_{\ell}} \,q_{2\ell+2,1}\\
 &< 
 \log B,
\end{align*}
where we have used $q_{2\ell+1,1} = 0$.
For notational ease we define the function
\begin{multline}
 \Gamma_k(B,\bsz_{2:2m})
  =
  \frac{\log B/S_0 - r(k+1)\Delta t + \Delta t\sum_{\ell=0}^k \hat{V}_\ell^2/2}
  {\sqrt{\Delta t}\sqrt{1-\rho^2}\sum_{\ell=0}^k\sqrt{\hat{V}_\ell} \, q_{2\ell+2,1}}
  \\
  -  
  \frac{\sum_{\ell=0}^k\sqrt{\hat{V}_\ell}\sqrt{\Delta t}\sum_{n=2}^{2m}\left( \rho q_{2\ell+1,n}+\sqrt{1-\rho^2} \, q_{2\ell+2,n}\right) z_n}
  {\sqrt{\Delta t}\sqrt{1-\rho^2}\sum_{\ell=0}^k\sqrt{\hat{V}_\ell} \, q_{2\ell+2,1}}
  .
\end{multline}
Here the notation $\bsz_{2:2m}$ is used to indicate the dependency on $z_2,\ldots,z_{2m}$, but not $z_1$.
Note that $\Gamma_k$ depends on all other market parameters as well, but this dependency is supressed not to clutter the formulas.
Because of the assumption that $q_{2k+1,1}=0$ for all $k$, $\hat{V}$ can be sampled independently of $z_1$.
This means the barrier condition can be written as a single condition on $z_1$, i.e.,
\begin{align*}
 z_1
 &< \min_k \Gamma_k(B,\bsz_{2:2m}) 
 %\quad\text{if}\quad \sum_{\ell=0}^{m-1}\sqrt{\hat{V}_{\ell}}q_{2\ell+2,1}>0,
 \quad\text{if all $q_{2\ell+2,1} > 0$},
\end{align*}
and 
\begin{align*}
 z_1
 &> \max_k \Gamma_k(B,\bsz_{2:2m})  
 %\quad\text{if}\quad \sum_{\ell=0}^{m-1}\sqrt{\hat{V}_{\ell}}q_{2\ell+2,1}<0.
 \quad\text{if all $q_{2\ell+2,1} < 0$}.
\end{align*}

The condition on $z_1$ was here derived for an up-\&-out option for ease of exposition.
The modifications for more complex barriers can easily be obtained from here.
\RefTab{table:bar_cons} gives an overview of the conditions on $z_1$ for the basic barrier types and shows that these conditions can easily be combined for more complex types.

\begin{table}[t]
\begin{center}
\hspace{-5mm}\begin{tabular}{ c | c }
\hline
Type & %$\sum_{\ell=0}^k\sqrt{\hat{V}_{\ell}}q_{2\ell+2,1}>0$ 
all $q_{2\ell+2,1} > 0$
\\
\hline 
U\&O ($B$) & $z_1<\min_k \Gamma_k(B,\bsz_{2:2m})$ \\
D\&O ($B$) & $z_1>\max_k \Gamma_k(B,\bsz_{2:2m})$ \\
U\&I ($B$) & $z_1>\min_k \Gamma_k(B,\bsz_{2:2m})$ \\
D\&I ($B$) & $z_1<\max_k \Gamma_k(B,\bsz_{2:2m})$ \\
U\&O + D\&O ($B_1>B_2$) & $z_1\in(\max_k \Gamma_k(B_2,\bsz_{2:2m}),\min_k \Gamma_k(B_1,\bsz_{2:2m}))$ \\
U\&O + D\&I ($B_1>B_2$) & $z_1< \min\{\max_k \Gamma_k(B_2,\bsz_{2:2m}),\min_k \Gamma_k(B_1,\bsz_{2:2m})\} $
\\
\hline
\hline
Type & %$\sum_{\ell=0}^k\sqrt{\hat{V}_{\ell}}q_{2\ell+2,1}<0$ 
all $q_{2\ell+2,1} < 0$
\\
\hline
U\&O ($B$) & $z_1>\max_k \Gamma_k(B,\bsz_{2:2m})$ \\
D\&O ($B$) & $z_1<\min_k \Gamma_k(B,\bsz_{2:2m})$ \\
U\&I ($B$) & $z_1<\max_k \Gamma_k(B,\bsz_{2:2m})$ \\
D\&I ($B$) & $z_1>\min_k \Gamma_k(B,\bsz_{2:2m})$ \\
U\&O + D\&O ($B_1>B_2$) & $z_1\in(\max_k \Gamma_k(B_1,\bsz_{2:2m}),\min_k \Gamma_k(B_2,\bsz_{2:2m}))$ \\
U\&O + D\&I ($B_1>B_2$) & $z_1>\max\{\max_k \Gamma_k(B_1,\bsz_{2:2m}),\min_k \Gamma_k(B_2,\bsz_{2:2m})\}$ 
\\ \hline
 \end{tabular}
\caption{\small{The barrier constraints on $z_1$ for different types of barriers: up-\&-out (U\&O), down-\&-out (D\&O), up-\&-in (U\&I), down-\&-in (D\&I) and some combinations.
}}
\label{table:bar_cons}
\end{center}
\end{table}

We now show the main results on our conditional sampling scheme.
Again, for expository reasons, specialized for the case of the up-\&-out option from above.
This result can easily be modified for other payout structures in the same spirit as the results in \RefTab{table:bar_cons}.
The following theorem holds for both the Monte Carlo method as for a randomly shifted quasi-Monte Carlo rule.
\begin{theorem}
  For the up-\&-out option\RefEq{eq:uao_opt} and assuming that we fixed $q_{2\ell+2,1} > 0$ for $\ell=0,\ldots,m-1$ \textup{(}see \RefLem{lem:samesign}\textup{)} the approximation based on sampling
  \begin{align*}
    \hat{g}(z_1, \ldots, z_m)
    &=
    \Phi\left( \min_{k} \Gamma_k(B, \bsz_{2:2m}) \right)
    \max\left(f(\hat{z}_1,z_2,\ldots,z_m),0\right)
  \end{align*}
  where, using the relation $z_1 = \Phi^{-1}(u_1)$,
  \begin{align}
    \label{eq:rescale}
    \hat{z}_1
    &=
    \Phi^{-1}\left( u_1 \min_{k} \Gamma_k(B, \bsz_{2:2m}) \right)
    ,
  \end{align}
  is unbiased.
  Furthermore, if we denote the respective unconditional method by
  \begin{align*}
    g(z_1, \ldots, z_m)
    &=
    \max\left(f(\hat{S}_1,\ldots,\hat{S}_m),0\right)
    \, \mathbb{I}\left\{\max_{k}\hat{S}_{k}<B\right\}
    ,
  \end{align*}
  where the $\hat{S}_1$, \ldots, $\hat{S}_m$ are obtained directly from $z_1$,\ldots,$z_m$ without using\RefEq{eq:rescale},
  then, when using the Monte Carlo method or a randomly shifted quasi-Monte Carlo method, the conditional sampling has reduced variance, i.e., $\Var[\hat{g}] \le \Var[g]$.
  Furthermore the inequality is strict if $\Prob[\max_{k} \hat{S}_k \geq B] >0$ and $\Exp[g] > 0$, i.e., if there is any chance of knock-out and positive payoff.
\end{theorem}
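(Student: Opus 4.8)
The plan is to reduce the whole statement to a one‑dimensional computation by conditioning on every normal coordinate except the first. Fix $z_2,\dots,z_{2m}$. Since the construction forces $q_{2\ell+1,1}=0$, the variance path $\hat V_0,\dots,\hat V_{m-1}$, and hence the number $\gamma:=\min_k\Gamma_k(B,\bsz_{2:2m})$, is a function of $z_2,\dots,z_{2m}$ alone, and — as shown in the lines preceding the theorem — the survival event $\{\max_k\hat S_k<B\}$ is exactly $\{z_1<\gamma\}$. Writing $z_1=\Phi^{-1}(u_1)$ with $u_1$ uniform on $(0,1)$, setting $p:=\Phi(\gamma)\in(0,1)$ and
\[
 \psi(u):=\max\bigl(f(\Phi^{-1}(u),z_2,\dots,z_{2m}),0\bigr)\ge 0,
\]
the unconditional integrand, conditionally on $\bsz_{2:2m}$, is $g=\psi(u_1)\,\mathbb{I}\{u_1<p\}$, whereas the conditionally sampled one is $\hat g=p\,\psi(p\,u_1)$ — this is precisely what the rescaling\RefEq{eq:rescale} does, since there $\hat z_1=\Phi^{-1}(p\,u_1)$. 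The single substitution $v=p\,u_1$ now gives the two elementary facts on which everything rests:
\[
 \int_0^1\hat g\rd u_1=\int_0^{p}\psi(v)\rd v=\int_0^1 g\rd u_1,
 \qquad
 \int_0^1\hat g^2\rd u_1=p\int_0^{p}\psi(v)^2\rd v\le\int_0^1 g^2\rd u_1 .
\]

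Unbiasedness is then immediate: taking the expectation of the first identity over $\bsz_{2:2m}$ gives $\Exp[\hat g]=\Exp[g]$, which is the unbiasedness claim. For the Monte Carlo variance I would use the law of total variance with respect to $\bsz_{2:2m}$. The two conditional means coincide by the first identity, so the contributions from the variance of the conditional mean are equal and
\[
 \Var[g]-\Var[\hat g]
 =\Exp_{\bsz_{2:2m}}\bigl[\Var[g\mid\bsz_{2:2m}]-\Var[\hat g\mid\bsz_{2:2m}]\bigr]
 =\Exp_{\bsz_{2:2m}}\Bigl[(1-\Phi(\gamma))\int_0^{\Phi(\gamma)}\psi(u)^2\rd u\Bigr]\ge 0 ,
\]
because conditionally $\Var[g\mid\bsz_{2:2m}]-\Var[\hat g\mid\bsz_{2:2m}]=(1-p)\int_0^p\psi^2$. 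The randomly shifted quasi‑Monte Carlo case follows the same two‑level split. If the first coordinate is sampled by plain Monte Carlo while $z_2,\dots,z_{2m}$ carry the randomly shifted rule, conditioning on that rule's output again makes the between‑strata part of the variance common to $g$ and $\hat g$ (equal conditional means) while the within‑stratum part drops by the second identity. If all $2m$ coordinates carry the shifted rule, one first averages the estimator over the shift of the first coordinate, which uniformises it, and is then left with a genuinely one‑dimensional comparison of the lattice‑rule variance of $u\mapsto p\,\psi(p\,u)$ against that of $u\mapsto\psi(u)\,\mathbb{I}\{u<p\}$ — trading the jump of $g$ at $p$ for a continuous reparametrisation. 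This last comparison is the only point where the second identity above does not by itself suffice, and I would carry it out as in the Black--Scholes treatment \cite{NicoBSLT}; it is the main technical obstacle.

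For strictness, observe that in the displayed gap the factor $1-\Phi(\gamma)$ is strictly positive for \emph{every} $\bsz_{2:2m}$: each $\Gamma_k$ is finite because its denominator $\sqrt{\Delta t}\sqrt{1-\rho^2}\sum_{\ell=0}^k\sqrt{\hat V_\ell}\,q_{2\ell+2,1}$ is strictly positive — using the Feller condition $\hat V_\ell>0$ and the assumption $q_{2\ell+2,1}>0$ — and its numerator is finite, so $\gamma<\infty$ and $p=\Phi(\gamma)<1$. Consequently $\Prob[\max_k\hat S_k\ge B]=\Exp_{\bsz_{2:2m}}[1-\Phi(\gamma)]>0$ is automatic here and merely rules out the degenerate barrier‑free case, and the gap is strictly positive exactly when $\int_0^{\Phi(\gamma)}\psi^2>0$ on a set of positive probability, which — since $\psi\ge0$ — is equivalent to $\Exp[g]=\Exp_{\bsz_{2:2m}}\bigl[\int_0^{\Phi(\gamma)}\psi\bigr]>0$. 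Hence $\Var[\hat g]<\Var[g]$ under the stated hypotheses.
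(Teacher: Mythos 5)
Your proposal is correct and follows essentially the route the paper intends: the paper's own ``proof'' is a one-line pointer to Theorems 3--5 of \cite{NicoBSLT}, and your conditioning argument is exactly the mechanism used there. Fixing $\bsz_{2:2m}$, using $q_{2\ell+1,1}=0$ to make $\gamma=\min_k\Gamma_k$ and the whole volatility path measurable with respect to $\bsz_{2:2m}$, and then reducing to the one-dimensional substitution $v=pu_1$ gives the two identities on which unbiasedness and the law-of-total-variance comparison rest; your computation of the conditional variance gap $(1-p)\int_0^p\psi^2$ and the strictness discussion are both right. Two remarks. First, you have silently (and correctly) repaired a typo in the statement: as printed, \RefEq{eq:rescale} reads $\hat z_1=\Phi^{-1}(u_1\min_k\Gamma_k)$, which is not even well defined; the intended formula is $\hat z_1=\Phi^{-1}\bigl(u_1\,\Phi(\min_k\Gamma_k)\bigr)$, which is what you use and what makes the prefactor $\Phi(\min_k\Gamma_k)$ produce an unbiased estimator. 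Second, you are right that the only genuinely nontrivial point is the randomly shifted QMC variance claim: when all coordinates carry the shifted rule, the single-sample comparison $\int\hat g^2\le\int g^2$ does not by itself control the cross terms $\Exp[\hat g(u_i)\hat g(u_j)]$ between distinct lattice points, and one really does need the argument of \cite{NicoBSLT} (averaging over the shift of the first coordinate and comparing the resulting one-dimensional quadrature variances). Flagging this rather than hand-waving it is appropriate; the paper defers to the same reference for exactly this step.
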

\begin{proof}
  The proof can be constructed similar to \cite[Theorem 3, 4 \& 5]{NicoBSLT} from our previous work. 
\end{proof}

The previous result shows that the proposed conditional algorithm can never do worse than its unconditional variant.
Furthermore, the more chance there is on a knock-out the more effect the conditional algorithm will have.
This can be observed in the examples in \RefSec{sec:LOGLTCSEX}.

%-------------------------------------------------------------------------------
% SUBSECTION: ROOTFINDING UNDER LOG LT
%-------------------------------------------------------------------------------

%\begin{remark}
\bigskip
\noindent\emph{Remark.}
The conditional sampling was applied to $z_1$ (or, equivalently, to $u_1$) to keep the asset from knocking out (or in). 
Taking it one step further one could try to add an additional bound on $z_1$, keeping $\bsz_{2:2m}$ constant, in order to force a strictly positive payout.
%, i.e., try to remove the zero outcomes from the sampling scheme.
This is more involved than the barrier condition however as for more complicated payoffs than calls and puts there might not exist analytical formulae such as in \RefTab{table:bar_cons} to condition $z_1$.
It is interesting to note that for calls and puts the same formulas can be used as in \RefTab{table:bar_cons}, only now restricting the $\Gamma_k$ functions to $\Gamma_m(K, \bsz_{2:2m})$. 
Adding this constraint to the existing barrier conditions is straightforward.
Root finding methods can be employed for more complex payout structures.
See our previous work \cite{NicoBSLT} for a detailed analysis of root finding for Asian options.
%\end{remark}

%-------------------------------------------------------------------------------
% SECTION: THE ORIGINAL LT MODEL
%-------------------------------------------------------------------------------

\section{The original LT method for Heston}\label{sect:ORLTCS}

We mentioned previously that it is essential for our method to switch to log prices.
To illustrate the problem, we introduce the LT method for the Heston model as in~\cite{IT2006} and
we derive also an explicit form of the orthogonal matrix $Q$ (cf.~ \RefProp{prop:optvLTc}).
However, the conditional sampling scheme from the previous section is not applicable.
The Euler--Maruyama discretizations for $S(t)$ and $V(t)$ are given by
\begin{align*}
  \hat{S}_{k+1}
  &=
  \hat{S}_k + r \hat{S}_k \Delta t + \sqrt{\hat{V}_k} \hat{S}_k \sqrt{\Delta t} \left( \rho Z^1_{k+1} + \sqrt{1-\rho^2} Z^2_{k+1} \right)
  ,
  \\
  \hat{V}_{k+1}
  &=
  \hat{V}_k + (\theta-\hat{V}_k) \kappa \Delta t + \sigma \sqrt{\hat{V}_k}\sqrt{\Delta t} Z^1_{k+1}
  ,
\end{align*}
compare with\RefEqEq{eq:logS}{eq:V}.
For ease of notation, we introduce the following functions:
{ \allowdisplaybreaks
\begin{align*} 
 f^1_k &= 1+r\Delta t+\sqrt{\hat{V}_k}\sqrt{\Delta t}\left(\rho Z^1_{k+1} + \sqrt{1-\rho^2}Z^2_{k+1}\right), \\
 f^2_k &= \frac{\hat{S}_k\sqrt{\Delta t}}{2\sqrt{\hat{V}_k}} \left(\rho Z^1_{k+1} + \sqrt{1-\rho^2}Z^2_{k+1}\right), \\
 f^3_k &= \hat{S}_k\sqrt{\hat{V}_k}\sqrt{\Delta t}, \\
 f^4_k &= 1-\kappa\Delta t +\frac{\sigma\sqrt{\Delta t}}{2\sqrt{\hat{V}_k}}Z^1_{k+1}, \\
 f^5_k &= \sigma\sqrt{\hat{V}_k}\sqrt{\Delta t}.
\end{align*}
}%
Note that all the above functions $f^i$ depend on $\bsZ$.
The recursion relations for the partial derivatives become
\begin{align*}
 \frac{\partial \hat{S}_{k+1}}{\partial z_i}
 &=
 \frac{\partial \hat{S}_k}{\partial z_i} f^1_k +\frac{\partial \hat{V}_k}{\partial z_i} f^2_k+q_{2k+1,i}\rho f^3_k + q_{2k+2,i}\sqrt{1-\rho^2}f^3_k
 , 
 \\
 \frac{\partial \hat{V}_{k+1}}{\partial z_i}
 &=
 \frac{\partial \hat{V}_k}{\partial z_i} f^4_k + q_{2k+1,i} f^5_k   
 ,
\end{align*}
for $k=0,\ldots,m-1$, and initial conditions $\partial \hat{S}_0 / \partial z_i = 0$ and $\partial \hat{V}_0 / \partial z_i = 0$.
With this notation we obtain the LT construction for the Heston model in explicit form.
%Here as well we summarize the LT construction for the Heston model in the following lemma.
\begin{proposition}\label{prop:optvLTc2}
 The column vector $\bsv=Q_{\bullet k}$ that maximizes the optimization problem\RefEq{eq:optLT} for a call option under the Heston model is given by $Q_{\bullet k} = \pm \bsv / \|\bsv\|$ where
\begin{align*}
% v_{2k-1} &= f^3_k\sqrt{1-\rho^2}\prod_{j=k}^{m-1}f^1_k,\qquad k=1,\ldots,m,\\
% v_{2k}   &= f^3_k\rho\prod_{j=k}^{m-1}f^1_k+f^5_k\sum_{t=k}^{m-1}f^2_k\prod_{v=t+1}^{m-1}f^1_v\prod_{\ell=k}^{t-1}f^4_{\ell}, \qquad k=1,\ldots,m.
%  \\
  v_{2\ell+1}
  &=
  f^3_{\ell} \rho \prod_{j=\ell+1}^{m-1} f^1_j + f^5_{\ell} \sum_{t=\ell+1}^{m-1} f^2_t \prod_{v=t+1}^{m-1} f^1_v \prod_{v=\ell+1}^{t-1} f^4_v
  ,
  \\
  v_{2\ell+2}
  &=
  f^3_{\ell} \sqrt{1-\rho^2} \prod_{j=\ell+1}^{m-1} f^1_j
  ,
\end{align*}
for $\ell=0,\ldots,m-1$.
%Note that since all functions $f^i$ depend on $\bsZ$, the vector $\bsv$ changes in each iteration step of\RefEq{eq:optLT} as the reference point $\hat{\bsz}$ is changed.
\end{proposition}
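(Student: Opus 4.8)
The plan is to follow the same route as in the proof of \RefProp{prop:optvLTc}, only replacing the log-price recursions by the plain-price ones. First I would invoke \cite[Theorem 1]{IT2006}: the maximizer of\RefEq{eq:optLT} at step $k$ is $Q_{\bullet k}=\pm\bsv/\|\bsv\|$, where $\bsv$ is determined from $Q_{\bullet k}'\bsv=\partial\hat{S}_m/\partial z_k$, i.e.\ from writing $\partial\hat{S}_m/\partial z_k$ as an explicit linear combination $\sum_i v_i\,q_{i,k}$. Note that, unlike the log-price case, here $f(\bsz)=\hat{S}_m-K$ depends on $z_k$ only through $\hat{S}_m$ itself, so no chain-rule factor $\hat{S}_m$ appears and we may work directly with $\partial\hat{S}_m/\partial z_k$.

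Next I would unroll the coupled recursion for $(\partial\hat{S}_k/\partial z_i,\partial\hat{V}_k/\partial z_i)$ using the variant of \RefLem{lem:RecAid} noted right after its proof (the one with an extra term $d'_k q'_k$ in the second recursion). The identification is: take $F_k=\partial\hat{V}_k/\partial z_i$ with $a_k=f^4_k$, $b_k=f^5_k$, $q_k=q_{2k+1,i}$; and $G_k=\partial\hat{S}_k/\partial z_i$ with $c_k=f^1_k$, $e_k=f^2_k$, and the two inhomogeneous terms $d_k q_k=\rho f^3_k\,q_{2k+1,i}$ and $d'_k q'_k=\sqrt{1-\rho^2}\,f^3_k\,q_{2k+2,i}$, i.e.\ $d_k=\rho f^3_k$, $d'_k=\sqrt{1-\rho^2}\,f^3_k$, $q'_k=q_{2k+2,i}$. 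Both recursions start at $0$, matching the hypotheses of the lemma.

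Evaluating the resulting closed form at index $k+1=m$ and collecting the coefficient of each $q_{2\ell+1,k}$ and $q_{2\ell+2,k}$ then yields $\partial\hat{S}_m/\partial z_k=\sum_{\ell=0}^{m-1}v_{2\ell+1}\,q_{2\ell+1,k}+\sum_{\ell=0}^{m-1}v_{2\ell+2}\,q_{2\ell+2,k}$ with $v_{2\ell+1},v_{2\ell+2}$ exactly as claimed: the even-indexed $q$'s enter only through $d'$, producing $v_{2\ell+2}=f^3_\ell\sqrt{1-\rho^2}\prod_{j=\ell+1}^{m-1}f^1_j$, while the odd-indexed $q$'s pick up both the direct $d$-contribution $f^3_\ell\rho\prod_{j=\ell+1}^{m-1}f^1_j$ and the indirect $b$-contribution routed through $F$, namely $f^5_\ell\sum_{t=\ell+1}^{m-1}f^2_t\prod_{v=t+1}^{m-1}f^1_v\prod_{v=\ell+1}^{t-1}f^4_v$. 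The claim then follows from \cite[Theorem 1]{IT2006}.

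I expect the only real care needed is bookkeeping: keeping the two families of columns separate (the odd-indexed $q_{2k+1,i}$ drives both the $\hat{V}$ and the $\hat{S}$ recursions, whereas $q_{2k+2,i}$ drives only $\hat{S}$), and handling the empty-product and empty-sum conventions at the boundary $\ell=m-1$ (where $\prod_{j=m}^{m-1}f^1_j=1$ and the $t$-sum is empty), so that the formula degenerates correctly to $v_{2m-1}=\rho f^3_{m-1}$ and $v_{2m}=\sqrt{1-\rho^2}\,f^3_{m-1}$.
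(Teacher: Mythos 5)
Your proposal is correct and is exactly the argument the paper intends: its own proof of this proposition is a one-line reference to the proof of \RefProp{prop:optvLTc} together with \RefLem{lem:RecAid}, and you have simply spelled that out, including the right identification of $a_k,\dots,e_k$ with $f^1_k,\dots,f^5_k$ in the two-forcing-term variant of the lemma and the correct observation that no chain-rule factor $\hat{S}_m$ appears here since one differentiates $\hat{S}_m$ directly rather than $\log\hat{S}_m$.
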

\begin{proof}
 The proof is similar to \RefProp{prop:optvLTc}, again making use of \RefLem{lem:RecAid}.
\end{proof}
%-------------------------------------------------------------------------------
% SUBSECTION: CONDITIONAL SAMPLING UNDER LT
%-------------------------------------------------------------------------------

To show the advantage for conditional sampling of the log-LT method (as explained in \RefSecSec{sec:LTmethod}{sect:cs}) over this version we consider again the up-\&-out option with payoff
\begin{align*}
 g(\hat{S}_1,\ldots,\hat{S}_m)
 &=
 \max\left(f(\hat{S}_1,\ldots,\hat{S}_m),0\right)\,\mathbb{I}\left\{\max_{k}\hat{S}_{k}<B\right\}
 .
\end{align*}
The barrier condition at an arbitrary time step $t_{k+1}$ takes the following form:
\begin{align*}
 \hat{S}_{k+1}
 &=
 \hat{S}_k \left( 1 + r \Delta t + \sqrt{\hat{V}_k} \sqrt{\Delta t} \left( \rho Z^1_{k+1} + \sqrt{1-\rho^2} Z^2_{k+1} \right) \right)
 \\
 &=
 S_0 \prod_{\ell=0}^k \left( 1 + r \Delta t + \sqrt{\hat{V}_{\ell}} \sqrt{\Delta t} \sum_{n=1}^{2m} \left( \rho q_{2\ell+1,n} + \sqrt{1-\rho^2} q_{2\ell+2,n} \right) z_n \right)
 \\
 &<
 B.
\end{align*}
%corresponds to
%\begin{align*}
%% \hat{S}_{k-1}\left(1+r\Delta t + \sqrt{\Delta t \hat{V}_{k-1}} \sum_{n=1}^{2m}q_{2k,n}z_n\right)
% \hat{S}_k \left(1+r\Delta t + \sqrt{\Delta t}\sqrt{\hat{V}_{k-1}} \sum_{n=1}^{2m}q_{2k,n}z_n\right)
% &<
% B.
%\end{align*}
Trying to condition on $z_1$, as we did in the log-LT model (assuming again $q_{2\ell+1,1}=0$), leads to the following condition:
\begin{align*}
  \prod_{\ell=0}^k \left( A_{\ell} + \sqrt{\hat{V}_{\ell}} \sqrt{\Delta t} \sqrt{1-\rho^2} q_{2\ell+2,1} z_1 \right)
  < \frac{B}{S_0}
\end{align*}
where
\begin{align*}
  A_{\ell} &= 1 + r \Delta t + \sqrt{\hat{V}_{\ell}} \sqrt{\Delta t} \sum_{n=2}^{2m} \left( \rho q_{2\ell+1,n} + \sqrt{1-\rho^2} q_{2\ell+2,n} \right) z_n
  .
\end{align*}
%and we used again $q_{2\ell+1,1} = 0$ to sample $\hat{V}$ independent of $z_1$.
%
%In the full general scheme, this leads to all sorts of highly non-linear inequalities, since $z_1$ is part of each variable $\hat{S}_k$ and also of each variable $\hat{V}_k$. 
%Even if the sub-optimal matrix $Q$ in which all elements $q_{2k,1}=0$ is used, the conditions remain cumbersome to work with.
%The condition for time step $t_k$ can be written down in $z_1$ as
%\begin{align*}
%% \prod_{\ell=1}^k\left( 1+r\Delta t+\sqrt{\Delta t\hat{V}_{\ell-1}}\sum_{n=2}^{2m}q_{2\ell,n}z_n+\sqrt{\Delta t\hat{V}_{\ell-1}}q_{2\ell,1}z_1\right)
% \prod_{\ell=1}^k\left( 1+r\Delta t+\sqrt{\Delta t}\sqrt{\hat{V}_{\ell-1}}\sum_{n=2}^{2m}q_{2\ell,n}z_n+\sqrt{\Delta t}\sqrt{\hat{V}_{\ell-1}}q_{2\ell,1}z_1\right)
% &< 
% B.
%\end{align*}
To satisfy the condition on $z_1$, a $k+1$-th order polynomial must be solved in order to find the regions where the above condition holds.
To find the global condition, one has to solve polynomials of degrees $1$ to $m$, and then find the overlapping regions where all conditions hold.
This quickly becomes impractical and we therefore use the log-LT method which does not have this drawback.

%-------------------------------------------------------------------------------
% SECTION: EXAMPLES
%-------------------------------------------------------------------------------

\section{Examples}\label{sec:LOGLTCSEX}

%In the numerical examples, unless specified otherwise, we use the Sobol' sequence with parameters from \cite{JK2008} and digital shifting \cite{DP2010} and we calculate all the columns of $Q$ for the LT construction.
%Other QMC point sets can be used as well.

\paragraph{\textit{Up-\&-out call and put}}%\label{sec:UOC+P}
Consider the up-\&-out call and put options with payoffs 
\begin{align*} 
 P_c(\hat{S}_1,\ldots,\hat{S}_m)
 &=
 \max\left(\hat{S}_m-K,0\right)\,\mathbb{I}\left\{\max_{k}\hat{S}_{k}<B\right\}
 ,
 \\
 P_p(\hat{S}_1,\ldots,\hat{S}_m)
 &=
 \max\left(K-\hat{S}_m,0\right)\,\mathbb{I}\left\{\max_{k}\hat{S}_{k}<B\right\}
 .
\end{align*}
The fixed model parameters are $r=0\%$ and $\kappa=1$. Furthermore, time is discretized using $m=250$ steps and thus our stochastic dimension is~$250$.
The results for this example are calculated using a lattice sequence (with generating vector \textsf{\small exod8\_base2\_m13} from \cite{NUYWEB} constructed using the algorithm in \cite{CKN2006}). 
The improvements of the standard deviations w.r.t.\ the Monte Carlo method for different choices of $\rho$, $S_0$, $V_0=\theta=\sigma$, $K$ and $B$ are shown in \RefTab{table:UOC+P}.
The results for the call and put option seem to be consistent over all choices of parameters: the new conditional scheme (denoted by {\sc QMC+LT+CS}) improves significantly on the unconditional LT method (denoted by {\sc QMC+LT}).
Note that the {\sc QMC+LT} method uses the construction of \RefProp{prop:optvLTc2}.
Adding root finding (denoted by {\sc QMC+LT+CS+RF}), to force a positive payout, further dramatically improves the results.
The improvement of the {\sc QMC+LT+CS} method for the put option is even larger than that for the call option.
This difference should not come as a surprise: when using conditional sampling on a knock-out option, $z_1$ is modified such that the asset does not hit the barrier. 
In case of an up-\&-out call option, the asset paths are essentially pushed down in order to achieve this.
The payout of the call option however is an increasing function of $\hat{S}_m$, so that pushing the asset paths down has the side-effect of also pushing a lot of paths out of the money. 
For the put option the reverse is true: the payout is a decreasing function of $\hat{S}_m$, meaning that pushing the paths down will result in more paths ending up in the money. 
Root finding can be used to control this off-setting effect in case of the call option, this effect is clearly visible in \RefTab{table:UOC+P}.
%The additional constraint on $z_1$ results in a decrease of the standard deviation by a factor $4.05$ for the call and a factor $1.89$ for the put option.
%Interestingly, root finding does not seem to significantly improve accuracy 
%For the put option the gain w.r.t.\ the conditional method is only minor.
These numerical results are illustrated in terms of $N$ in \RefFig{fig:CallConv} for two parameter choices for the call option.
%The reduction in standard deviation does not correspond to an increase in convergence speed, as \RefFig{fig:CallConv} shows. 
%Here the convergence for two parameter choices are shown for the up-\&-out call option.
%We clearly see a significant variance reduction for our conditional sampling scheme.

\begin{table}[t]
\begin{center}\footnotesize
\hspace{-5mm}\begin{tabular}{ l B A c|c} 
$(V_0=\theta=\sigma, \rho, S_0, K, B)$ & QMC+LT+CS+RF & QMC+LT+CS & QMC+LT & Value
\\
\hline  
\multicolumn{5}{c}{\textbf{Call}} \\
\hline
$(0.2,-0.5,90,80,100)$ & $405$\% & $148$\% & $98$\% & $0.09$ \\  
$(0.2,-0.5,100,100,120)$ & $502$\% & $173$\% & $90$\% & $0.09$\\ 
$(0.2,-0.5,110,100,150)$ & $463$\% & $231$\% & $117$\% & $1.25$\\  
$(0.2,0.5,90,80,100)$ & $474$\% & $120$\% & $124$\% & $0.08$\\  
$(0.2,0.5,100,100,125)$ & $446$\% & $130$\% & $99$\% & $0.16$\\ 
$(0.2,0.5,110,100,140)$ & $454$\% & $166$\% & $136$\% & $0.56$\\ 
\hline 
$(0.3,-0.5,90,80,100)$ & $623$\% & $160$\% & $82$\% & $0.05$\\  
$(0.3,-0.5,100,100,120)$ & $590$\% & $160$\% & $144$\% & $0.06$\\  
$(0.3,-0.5,110,100,150)$ & $429$\% & $246$\% & $141$\% & $0.77$\\  
$(0.3,0.5,90,80,100)$ & $360$\% & $191$\% & $106$\% & $0.05$ \\ 
$(0.3,0.5,100,100,125)$ & $353$\% & $141$\% & $81$\% & $0.10$ \\ 
$(0.3,0.5,110,100,140)$ & $367$\% & $142$\% & $104$\% & $0.34$ \\
\hline  
\multicolumn{5}{c}{\textbf{Put}} \\
\hline
$(0.2,-0.5,90,80,100)$ & $367$\% & $331$\% & $184$\% & $9.02$\\ 
$(0.2,-0.5,100,100,105)$ & $279$\% & $235$\% & $126$\% & $7.76$\\  
$(0.2,-0.5,110,100,112)$ & $298$\% & $263$\% & $123$\% & $4.44$\\ 
$(0.2,0.5,90,80,100)$ & $361$\% & $376$\% & $148$\% & $6.05$\\
$(0.2,0.5,100,100,105)$ & $326$\% & $298$\% & $131$\% & $5.33$\\
$(0.2,0.5,110,100,112)$ & $317$\% & $325$\% & $149$\% & $2.98$ \\ 
\hline
$(0.3,-0.5,90,80,100)$ & $383$\% & $348$\% & $137$\% & $10.3$\\  
$(0.3,-0.5,100,100,105)$ & $260$\% & $243$\% & $144$\% & $8.65$ \\ 
$(0.3,-0.5,110,100,112)$ & $214$\% & $187$\% & $129$\% & $5.38$\\ 
$(0.3,0.5,90,80,100)$ & $380$\% & $294$\% & $160$\% & $6.44$ \\  
$(0.3,0.5,100,100,105)$ & $304$\% & $272$\% & $174$\% & $5.57$ \\ 
$(0.3,0.5,110,100,112)$ & $305$\% & $279$\% & $124$\% & $3.33$ \\
 \end{tabular}
 \caption{\small{Up-\&-out call and put. The reported numbers are the standard deviations of the MC method divided by those of the 
QMC+LT+CS+RF, QMC+LT+CS and QMC+LT methods. The MC method uses $30720$ samples, while the QMC methods use $1024$ samples and $30$ independent shifts. The rightmost column denotes the option value.
}}
\label{table:UOC+P}
\end{center}
\end{table}

%\begin{table}[t]
%\begin{center}\footnotesize
%\hspace{-5mm}\begin{tabular}{ c B A c}
%Put & & & 
%\\
%\hline  
%$(V_0=\theta=\sigma, \rho, S_0, K, B)$ & QMC+LT+CS+RF & QMC+LT+CS & QMC+LT
%\\
%\hline  
%$(0.2,-0.5,90,80,100)$ & $467$\% & $464$\% & $271$\% \\ 
%$(0.2,-0.5,100,100,105)$ & $363$\% & $350$\% & $149$\% \\ 
%$(0.2,-0.5,110,100,112)$ & $261$\% & $238$\% & $161$\% \\ 
%$(0.2,0.5,90,80,100)$ & $417$\% & $327$\% & $238$\% \\ 
%$(0.2,0.5,100,100,105)$ & $387$\% & $348$\% & $142$\% \\ 
%$(0.2,0.5,110,100,112)$ & $312$\% & $301$\% & $185$\% \\ 
%\hline
%$(0.3,-0.5,90,80,100)$ & $376$\% & $340$\% & $168$\% \\ 
%$(0.3,-0.5,100,100,105)$ & $383$\% & $349$\% & $185$\% \\ 
%$(0.3,-0.5,110,100,112)$ & $286$\% & $263$\% & $111$\% \\ 
%$(0.3,0.5,90,80,100)$ & $470$\% & $391$\% & $177$\% \\ 
%$(0.3,0.5,100,100,105)$ & $354$\% & $316$\% & $177$\% \\ 
%$(0.3,0.5,110,100,112)$ & $323$\% & $281$\% & $170$\%
% \end{tabular}
%\caption{Up-\&-out barrier put. The reported numbers are the standard deviations of the MC method divided by those of the 
%QMC+LT+CS+RF, QMC+LT+CS and QMC+LT methods. The MC method uses $30720$ samples, while the QMC methods use $1024$ samples and $30$ independent shifts.
%}
%\label{table:UOC+P}
%\end{center}
%\end{table}

\begin{figure}[t]
  \centering
    \hspace*{-7mm}\begin{tikzpicture}[scale=0.7]
      \begin{loglogaxis}[width=11cm,xlabel=$N$,ylabel=std.dev.,title={$(0.2,-0.5,110,100,150)$},log basis x=2]
	\addplot[mark=*, mark options={fill=white}] file {UOC1_CONV_MC.txt};
	\addplot[mark=triangle*, mark options={fill=white}] file {UOC1_CONV_LT.txt};
	\addplot[mark=square*, mark options={fill=white}] file {UOC1_CONV_CS.txt};
	\addplot[mark=square*, mark options={fill=black}] file {UOC1_CONV_RF.txt};
      \end{loglogaxis}
    \end{tikzpicture}%
    \begin{tikzpicture}[scale=0.7]
      \begin{loglogaxis}[width=11cm,xlabel=$N$,ylabel=std.dev.,title={$(0.3,0.5,100,100,120)$},log basis x=2]
	\addplot[mark=*, mark options={fill=white}] file {UOC2_CONV_MC.txt};
	\addplot[mark=triangle*, mark options={fill=white}] file {UOC2_CONV_LT.txt};
	\addplot[mark=square*, mark options={fill=white}] file {UOC2_CONV_CS.txt};
	\addplot[mark=square*, mark options={fill=black}] file {UOC2_CONV_RF.txt};
      \end{loglogaxis}
      \begin{scope}[shift={(.25,.25)}] 
	\draw (0,0) -- 
		plot[mark=square*, mark options={fill=black}] (0.25,0) -- (0.5,0) 
		node[right]{\tiny{QMC+LT+CS+RF}};
	\draw[yshift=\baselineskip] (0,0) -- 
		plot[mark=square*, mark options={fill=white}] (0.25,0) -- (0.5,0)
		node[right]{\tiny{QMC+LT+CS}};
	\draw[yshift=2\baselineskip] (0,0) -- 
		plot[mark=triangle*, mark options={fill=white}] (0.25,0) -- (0.5,0)
		node[right]{\tiny{QMC+LT}};
	\draw[yshift=3\baselineskip] (0,0) -- 
		plot[mark=*, mark options={fill=white}] (0.25,0) -- (0.5,0)
		node[right]{\tiny{MC}};
	\end{scope}
    \end{tikzpicture}
    \caption{\small{Up-\&-out call convergence plots for two options with different parameters. The fixed parameters are $r=0$\% and $\kappa=1$. 
    The different choices for $(V_0=\theta=\sigma,\rho,S_0,K,B)$ are denoted above the figures.}}
    \label{fig:CallConv}
\end{figure}
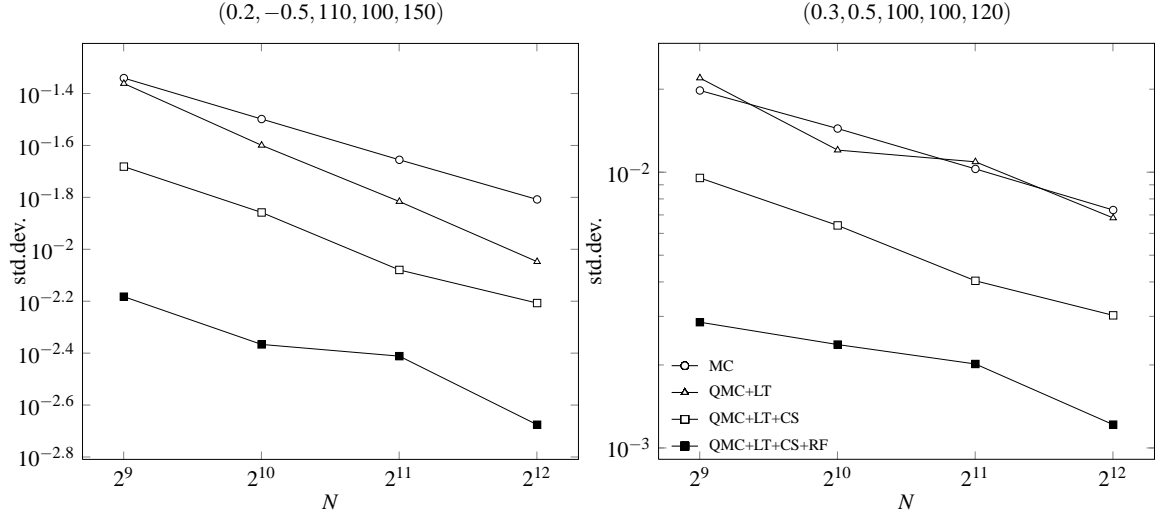

\paragraph{\textit{Up-\&-in call}}\label{sec:UIC+P}
Consider an up-\&-in call option with payoff
\begin{align*} 
 P(\hat{S}_1,\ldots,\hat{S}_m)
 &=
 \max\left(\hat{S}_m-K,0\right)\,\mathbb{I}\left\{\max_{k}\hat{S}_{k}>B\right\}
 .
\end{align*}
The fixed model parameters are $r=2\%$, $\kappa=1$ and $\sigma=0.2$. Again, $m=250$.
Here we use the Sobol' sequence with parameters from \cite{JK2008} and digital shifting \cite{DP2010}.
The standard deviations for different choices of $\rho$, $S_0$, $V_0=\theta$, $K$ and $B$ are shown in \RefTab{table:UIC}.
The improvements of the conditional scheme are extremely high for this case.
Note the impact of the correlation on the results: the improvement for $\rho=0.5$ is even approximately twice that for $\rho=-0.5$.
All parameter choices indicate that conditional sampling on the barrier condition greatly improves accuracy.
% DN: Ronald wil liever 'significantly' ipv 'greatly', maar ik laat het voorlopig staan. Beide woordkeuzes zeggen sowieso niets...
%,
%with the standard deviation a factor $6.46$ smaller than the unconditional LT method, on average.
Adding the additional condition of the payout itself (root finding) provides another serious reduction in the standard deviation.
%leading to an average decrease in standard deviation of $11.65$ compared to the regular LT method.

\begin{table}[t]
\begin{center}\footnotesize
\hspace{-5mm}\begin{tabular}{ l B A c|c} 
$(V_0=\theta, \rho, S_0, K, B)$ & QMC+LT+CS+RF &  QMC+LT+CS & QMC+LT & Value
\\
\hline  
$(0.1,-0.5,90,80,160)$ & $2158$\% & $1515$\% & $242$\% & $5.47$\\ 
$(0.1,-0.5,100,100,180)$ & $2377$\% & $1542$\% & $240$\% & $5.05$ \\ 
$(0.1,-0.5,110,120,200)$ & $2572$\% & $1545$\% & $250$\%  & $4.74$\\  
$(0.1,0.5,90,80,160)$ & $1557$\% & $654$\% & $341$\% & $17.4$\\ 
$(0.1,0.5,100,100,180)$ & $1564$\% & $644$\% & $354$\% & $16.9$\\ 
$(0.1,0.5,110,120,200)$ & $1556$\% & $640$\% & $373$\% & $16.6$\\ 
\hline
$(0.15,-0.5,90,80,160)$ & $2044$\% & $1247$\% & $366$\% & $10.6$ \\ 
$(0.15,-0.5,100,100,180)$ & $2243$\% & $1262$\% & $420$\% & $10.1$\\ 
$(0.15,-0.5,110,120,200)$ & $2391$\% & $1236$\% & $349$\% & $9.72$\\  
$(0.15,0.5,90,80,160)$ & $1570$\% & $568$\% & $421$\% & $23.3$\\  
$(0.15,0.5,100,100,180)$ & $1622$\% & $567$\% & $418$\% & $23.0$\\  
$(0.15,0.5,110,120,200)$ & $1649$\% & $562$\% & $366$\% & $22.9$
%$(0.1,-0.5,90,80,160)$ & $1720$\% & $1354$\% & $265$\% & $5.47$\\ 
%$(0.1,-0.5,100,100,180)$ & $1745$\% & $1370$\% & $321$\% & $5.05$ \\ 
%$(0.1,-0.5,110,120,200)$ & $1764$\% & $1388$\% & $306$\%  & $4.74$\\ 
%$(0.1,0.5,90,80,160)$ & $971$\% & $660$\% & $447$\% & $17.4$\\ 
%$(0.1,0.5,100,100,180)$ & $972$\% & $636$\% & $439$\% & $16.9$\\ 
%$(0.1,0.5,110,120,200)$ & $969$\% & $612$\% & $384$\% & $16.6$\\ 
%\hline
%$(0.15,-0.5,90,80,160)$ & $1972$\% & $1424$\% & $430$\% & $10.6$ \\ 
%$(0.15,-0.5,100,100,180)$ & $2030$\% & $1426$\% & $441$\% & $10.1$\\ 
%$(0.15,-0.5,110,120,200)$ & $2057$\% & $1409$\% & $377$\% & $9.72$\\ 
%$(0.15,0.5,90,80,160)$ & $1421$\% & $728$\% & $447$\% & $23.3$\\ 
%$(0.15,0.5,100,100,180)$ & $1494$\% & $710$\% & $370$\% & $23.0$\\ 
%$(0.15,0.5,110,120,200)$ & $1540$\% & $709$\% & $448$\% & $22.9$
\end{tabular}
\caption\small{{Up-\&-in call. The reported numbers are the standard deviations of the MC method divided by those of the 
QMC+LT+CS+RF, QMC+LT+CS and QMC+LT methods. The MC method uses $30720$ samples, while the QMC+LT+CS+RF, QMC+LT+CS and QMC+LT methods use $1024$ samples and $30$ independent shifts. The rightmost column denotes the option value.
}}
\label{table:UIC}
\end{center}
\end{table}

\paragraph{\textit{Up-\&-out Asian}}\label{sec:UOA}
Consider an up-\&-out Asian option with payoff
\begin{align*} 
 P(\hat{S}_1,\ldots,\hat{S}_m)
 &=
 \max\left(\frac{1}{m}\sum_{k=1}^m\hat{S}_{k}-K,0\right)\,\mathbb{I}\left\{\max_{k}\hat{S}_{k}<B\right\}
 .
\end{align*}
The fixed model parameters are $r=5\%$, $\kappa=1$ and $\sigma=0.2$. The number of time steps is fixed at $m=250$.
We use the Sobol' sequence as in the previous example and the results are shown in \RefTab{table:UOAC}.
%We use the Sobol' sequence with parameters from \cite{JK2008} and digital shifting \cite{DP2010}.
%The standard deviations for different choices of $\rho$, $S_0$, $V_0=\theta$, $K$ and $B$ are shown in \RefTab{table:UOAC}.
The results are once more very satisfactory with similar results as for the up-\&-out call and put options in \RefTab{table:UOC+P}.
\RefFig{fig:AsianConv} shows the convergence behaviour for two sets of parameter choices. 
As before, a significant variance reduction can be seen for our conditional sampling scheme and the root finding method further improves this result.

\begin{table}[t]
\begin{center}\footnotesize
\hspace{-5mm}\begin{tabular}{ l B A c|c} 
$(V_0=\theta, \rho, S_0, K, B)$ & QMC+LT+CS+RF &  QMC+LT+CS & QMC+LT & Value 
\\
\hline  
$(0.1,-0.5,90,80,120)$ & $483$\% & $329$\% & $154$\% & $1.70$\\
$(0.1,-0.5,100,100,140)$ & $461$\% & $245$\% & $185$\%  & $0.77$\\
$(0.1,-0.5,110,120,160)$ & $404$\% & $189$\% & $110$\% & $0.30$\\
$(0.1,0.5,90,80,120)$ & $392$\% & $328$\% & $144$\% & $1.34$\\
$(0.1,0.5,100,100,140)$ & $414$\% & $252$\% & $115$\% & $0.53$\\
$(0.1,0.5,110,120,160)$ & $502$\% & $209$\% & $133$\% & $0.18$ \\
%$(0.1,-0.5,90,80,120)$ & $501$\% & $300$\% & $125$\% & $1.70$\\
%$(0.1,-0.5,100,100,140)$ & $448$\% & $200$\% & $153$\% & $0.77$\\
%$(0.1,-0.5,110,120,160)$ & $395$\% & $148$\% & $144$\% & $0.30$\\
%$(0.1,0.5,90,80,120)$ & $427$\% & $263$\% & $137$\% & $1.34$\\
%$(0.1,0.5,100,100,140)$ & $467$\% & $236$\% & $115$\% & $0.53$\\
%$(0.1,0.5,110,120,160)$ & $525$\% & $191$\% & $99$\% & $0.18$ \\
\hline
$(0.15,-0.5,90,80,120)$ & $463$\% & $247$\% & $143$\%  & $0.77$\\
$(0.15,-0.5,100,100,140)$ & $425$\% & $183$\% & $125$\% & $0.29$\\
$(0.15,-0.5,110,120,160)$ & $389$\% & $161$\% & $93$\% & $0.10$\\
$(0.15,0.5,90,80,120)$ & $416$\% & $257$\% & $111$\% & $0.61$\\
$(0.15,0.5,100,100,140)$ & $486$\% & $201$\% & $119$\% & $0.20$\\
$(0.15,0.5,110,120,160)$ & $528$\% & $171$\% & $108$\% & $0.05$
%$(0.15,-0.5,90,80,120)$ & $531$\% & $277$\% & $107$\% & $0.77$\\
%$(0.15,-0.5,100,100,140)$ & $482$\% & $175$\% & $96$\% & $0.29$\\
%$(0.15,-0.5,110,120,160)$ & $389$\% & $137$\% & $100$\% & $0.10$\\
%$(0.15,0.5,90,80,120)$ & $361$\% & $224$\% & $128$\% & $0.61$\\
%$(0.15,0.5,100,100,140)$ & $335$\% & $158$\% & $129$\% & $0.20$\\
%$(0.15,0.5,110,120,160)$ & $394$\% & $158$\% & $92$\% & $0.05$
\end{tabular}
\caption{\small{Up-\&-out Asian call. The reported numbers are the standard deviations of the MC method divided by those of the 
QMC+LT+CS+RF, QMC+LT+CS and QMC+LT methods. The MC method uses $30720$ samples, while the QMC+LT+CS+RF, QMC+LT+CS and QMC+LT methods use $1024$ samples and $30$ independent shifts. The rightmost column denotes the option value.
}}
\label{table:UOAC}
\end{center}
\end{table}

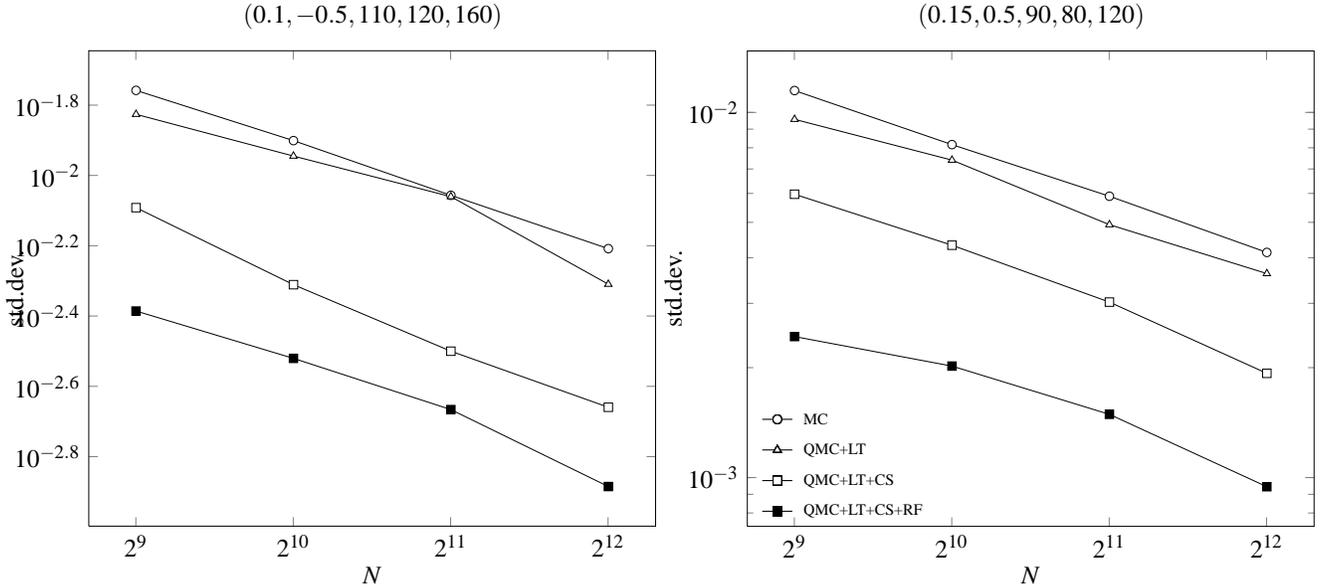
\begin{figure}[t]
  \centering
    \hspace*{-7mm}\begin{tikzpicture}[scale=0.8]
      \begin{loglogaxis}[width=11cm,xlabel=$N$,ylabel=std.dev.,title={$(0.1,-0.5,110,120,160)$},log basis x=2]
	\addplot[mark=*, mark options={fill=white}] file {UAC1_CONV_MC.txt};
	\addplot[mark=triangle*, mark options={fill=white}] file {UAC1_CONV_LT.txt};
	\addplot[mark=square*, mark options={fill=white}] file {UAC1_CONV_CS.txt};
	\addplot[mark=square*, mark options={fill=black}] file {UAC1_CONV_RF.txt};
      \end{loglogaxis}
    \end{tikzpicture}%
    \begin{tikzpicture}[scale=0.8]
      \begin{loglogaxis}[width=11cm,xlabel=$N$,ylabel=std.dev.,title={$(0.15,0.5,90,80,120)$},log basis x=2]
	\addplot[mark=*, mark options={fill=white}] file {UAC2_CONV_MC.txt};
	\addplot[mark=triangle*, mark options={fill=white}] file {UAC2_CONV_LT.txt};
	\addplot[mark=square*, mark options={fill=white}] file {UAC2_CONV_CS.txt};
	\addplot[mark=square*, mark options={fill=black}] file {UAC2_CONV_RF.txt};
      \end{loglogaxis}
      \begin{scope}[shift={(.25,.25)}] 
	\draw (0,0) -- 
		plot[mark=square*, mark options={fill=black}] (0.25,0) -- (0.5,0) 
		node[right]{\tiny{QMC+LT+CS+RF}};
	\draw[yshift=\baselineskip] (0,0) -- 
		plot[mark=square*, mark options={fill=white}] (0.25,0) -- (0.5,0)
		node[right]{\tiny{QMC+LT+CS}};
	\draw[yshift=2\baselineskip] (0,0) -- 
		plot[mark=triangle*, mark options={fill=white}] (0.25,0) -- (0.5,0)
		node[right]{\tiny{QMC+LT}};
	\draw[yshift=3\baselineskip] (0,0) -- 
		plot[mark=*, mark options={fill=white}] (0.25,0) -- (0.5,0)
		node[right]{\tiny{MC}};
	\end{scope}
    \end{tikzpicture}
    \caption{\small{Up-\&-out Asian call convergence plots for two options with different parameters. The fixed parameters are $r=5$\%, $\kappa=1$ and $\sigma=0.2$. 
    The different choices for $(V_0=\theta,\rho,S_0,K,B)$ are denoted above the figures.}}
    \label{fig:AsianConv}
\end{figure}

\section{Conclusion and outlook}

The conditional sampling scheme for the LT method introduced in \cite{NicoBSLT} for the Black--Scholes model has been extended to the Heston model. 
This was done by considering log prices and making the sampling of the volatility process independent of $z_1$.
We also obtained explicit constructions for the matrix $Q$ of the LT method.
The numerical results show that the method is very effective in reducing variance and outperforms the LT method by a huge margin.
We only considered an Euler--Maruyama discretization scheme for the asset and volatility processes.
It might be interesting to see if the theory and results carry over when other simulation methods are used, see \cite{VHP2010} for an overview of other methods.

%\begin{acknowledgement}
%This research is part of a project funded by the Research Fund KU\ Leuven.
%Dirk Nuyens is a fellow of the Research Foundation Flanders (FWO).
%This paper presents research results of the Belgian Network DYSCO (Dynamical Systems, Control, and Optimization), funded by the Interuniversity Attraction Poles Programme, initiated by the Belgian State, Science Policy Office. 
%%The scientific responsibility rests with its author(s).
%\end{acknowledgement}

\clearpage


\begin{thebibliography}{99.}%

\bibitem{NicoBSLT}
N.~Achtsis, R.~Cools and D.~Nuyens.
\newblock Conditional sampling for barrier option pricing under the {LT}
  method.
\newblock Submitted. Available at \url{http://arxiv.org/abs/1111.4808}

\bibitem{CKN2006}
R.~Cools, F.~Y. Kuo, and D.~Nuyens.
\newblock Constructing embedded lattice rules for multivariate integration.
\newblock {\em SIAM Journal on Scientific Computing}, 28(6):2162--2188, 2006.

\bibitem{DP2010}
J.~Dick and F.~Pillichshammer.
\newblock {\em Digital Nets and Sequences: Discrepancy Theory and Quasi-Monte
  Carlo Integration}.
\newblock Cambridge University Press, 2010.

\bibitem{GKSW2008}
M.~B. Giles, F.~Y. Kuo, I.~H. Sloan, and B.~J. Waterhouse.
\newblock Quasi-{M}onte {C}arlo for finance applications.
\newblock {\em ANZIAM Journal}, 50:308--323, 2008.

\bibitem{Glass2003}
P.~{Glasserman}.
\newblock {\em {M}onte {C}arlo Methods in Financial Engineering}.
\newblock Springer, 2003.

\bibitem{GS2001}
P.~{Glasserman} and J.~{Staum}.
\newblock Conditioning on one-step survival for barrier option simulations.
\newblock {\em Operations Research}, 49(6):923--937, 2001.

\bibitem{JK2008}
S.~{Joe} and F.~Y. {Kuo}.
\newblock Constructing {S}obol' sequences with better two-dimensional
  projections.
\newblock {\em SIAM Journal of Scientific Computing}, 30:2635--2654, 2008.

\bibitem{Heston1993}
S.L.\ {Heston}.
\newblock A closed-form solution for options with stochastic volatility with applications to bond and currency options.
\newblock {\em The Review of Financial Studies}, 6(2):327--343, 1993. 

\bibitem{IT2006}
J.~{Imai} and K.~S.\ {Tan}.
\newblock A general dimension reduction technique for derivative pricing.
\newblock {\em Journal of Computational Finance}, 10(2):129--155, 2006.

\bibitem{KP1992}
P.E.\ Kloeden and E.~Platen.
\newblock Numerical Solution of Stochastic Differential Equations. 
\newblock Springer-Verlag, 1992.

\bibitem{LEC2009}
 P.~L'{\'E}cuyer.
\newblock Quasi-Monte Carlo methods with applications in finance.
\newblock {\em Finance and Stochastics}, 13(3):307--349, 2009.

\bibitem{NW2012}
D.~Nuyens and B.J.~Waterhouse.
\newblock A global adaptive quasi-{M}onte {C}arlo algorithm for functions of
  low truncation dimension applied to problems from finance.
\newblock In H.~Wo{\'z}niakowski and L.~Plaskota, editors, {\em {M}onte {C}arlo
  and Quasi-{M}onte {C}arlo Methods 2010}, pages 589--607. Springer-Verlag, 2012.

\bibitem{NUYWEB}
\url{http://people.cs.kuleuven.be/~dirk.nuyens/qmc-generators} (27/07/2012)

% DN: de auteurs moeten blijkbaar omgewisseld staan:
% http://www.worldscinet.com/ijtaf/13/1301/S0219024910005668.html
%Volume: 13, Issue: 1(2010) pp. 1-43     DOI: 10.1142/S0219024910005668
\bibitem{VHP2010}
A.~{Van Haastrecht} and A.A.J.~{Pelsser}.
\newblock Efficient, almost exact simulation of the Heston stochastic volatility model.
\newblock {\em International Journal of Theoretical and Applied Finance}, 31(1):1--43, 2010.

\end{thebibliography}
\end{document}